\renewcommand{\algorithmiccomment}[1]{\bgroup\hfill//~#1\egroup}
\newtheorem{theorem}{Theorem}
\newtheorem{corollary}{Corollary}
\newtheorem{definition}{Definition}
\newtheorem{example}{Example}
\newtheorem{problem}{Problem}
\newtheorem{remark}{Remark}
\newcommand{\ucs}{$\mathsf{UCS}$\xspace}
\newcommand{\eiucs}{$\mathsf{EI}$-$\mathsf{UCS}$\xspace}
\newcommand{\beauty}{$\mathsf{BEAUTY}$\xspace}
\newcommand{\beautyps}{$\mathsf{BEAUTY}$-$\mathsf{PS}$\xspace}
\newcommand{\abeauty}{$\mathsf{A}$-$\mathsf{BEAUTY}$\xspace}
\newcommand{\beast}{$\mathsf{BEAST}$\xspace}
\newcommand{\bnb}{$\mathsf{BEAUTY}$\&$\mathsf{BEAST}$\xspace}
\DeclareMathOperator*{\argmin}{arg\,min}
\title{Tightest Admissible Shortest Path}
\author{	
	Eyal Weiss\textsuperscript{\rm 1},
	Ariel Felner\textsuperscript{\rm 2}, 
	Gal A. Kaminka\textsuperscript{\rm 1}}
\begin{document}

\maketitle

\begin{abstract}
The shortest path problem in graphs is fundamental to AI. 
Nearly all variants of the problem and 
relevant algorithms that solve them ignore edge-weight computation time and its common relation to weight uncertainty.
This implies that taking these factors into consideration can potentially lead to a performance boost in relevant applications.
Recently, a generalized framework for weighted directed graphs was suggested, where edge-weight can be computed (estimated) multiple times, at increasing accuracy and run-time expense. 
We build on this framework to introduce the problem of finding the tightest admissible shortest path (TASP); a path with the tightest suboptimality bound on the optimal cost.
This is a generalization of the shortest path problem to bounded uncertainty, where edge-weight uncertainty can be traded for computational cost.
We present a complete algorithm for solving TASP, with guarantees on solution quality. 
Empirical evaluation supports the effectiveness of this approach.
\end{abstract}
\section{Introduction}\label{sec:intro}

Finding the shortest path in a directed, weighted graph is fundamental to artificial intelligence and its applications.
The \emph{cost} of a path is the sum of the weights of its edges. 
Informed and uninformed search algorithms for finding  \emph{shortest} (minimal-cost) paths are heavily used in planning, scheduling, 
machine learning, constraint 
optimization, etc.  
 
Graph edge-weights are commonly assumed to be available in negligible time. 
However, this does not hold in many applications.
%
%
When weights are determined by queries to remote sources, 
or when a massive graph is stored in external memory (e.g., disk) then 
the order in which edges are visited---accessing external memory---needs to be optimized~\cite{vitter01,hutchinson03,jabbar08,korf2008linear,korf08tbbfs,korf16,sturtevant2016external}.
Similarly, when edge-weights are computed dynamically using learned models, or external procedures, it is beneficial to delay weight evaluation until necessary~\cite{dellin2016unifying,narayanan2017heuristic,mandalika2018lazy,mandalika2019generalized}.

Instead of delaying and re-ordering expensive edge-weight evaluations, a recent formalization focuses instead on using multiple \textit{weight estimators}~\cite{weiss2023planning,weiss2023generalization,weiss2022position}. 
Edge-weights are replaced with an ordered set of estimators, each providing lower and upper bounds on the true weight. 
Incrementally, subsequent estimators can tighten the bounds, but at increasing computation time. 
A search algorithm may quickly compute loose bounds on the edge-weight, and invest more computation on a tighter estimator later in the process.  
 
For example, consider finding the fastest route between two cities, where edges and their weights represent roads and travel times, resp. First rough bounds on travel time can be estimated from fixed distances and speed limits (say, from a local database). For more accuracy, Google Maps, which considers additional road factors and traffic data, can be queried online. Even more accuracy can be achieved---at increased run-time---by further considering \emph{vehicle attributes} together with road characteristics, which can be highly relevant for large and heavy vehicles such as trucks and buses~\cite{ptv}.
%
%
%



Having  multiple weight estimators for edges is a proper generalization of standard edge-weights, and raises several shortest path problem variants. 
The classic singular edge-weight is a special case, of an estimator whose lower- and upper- bounds are identical. However, since the true weight may not be known (even applying the most expensive estimator), other variants of the shortest path problems involve finding paths that have the best bounds on the optimal cost.

In this paper, we introduce the \emph{tightest admissible shortest path} (TASP) problem, which is an important shortest-path problem variant in graphs
with multiple-estimated edge-weights. Its solution provides an answer to the question: what is the tightest suboptimality factor w.r.t. the optimal cost that can be achieved given that edge-weights have uncertainty bounds?
We show that solving TASP can be reduced to solving two simpler problems:
The \emph{shortest path tightest lower bound} (SLB), that was introduced and solved optimally by the \beauty algorithm in~\cite{weiss2023generalization}; and the \emph{shortest path tightest upper bound} (SUB) problem, which we define and solve in this paper. 

To solve SUB, we present \beast, an uninformed search algorithm based on \emph{uniform-cost search} (\ucs, a variant of \emph{Dijksra's} algorithm)~\cite{dijkstra1959note, felner2011position}, that takes advantage of prior information about the solution of SUB.
We then use it to construct \bnb, an algorithm that solves TASP problems by exploiting coupling between the problems of SLB and SUB.
Experiments demonstrate the effectiveness of \bnb compared to the baseline that solves SLB and SUB separately.

Our main contributions in this paper are threefold:
{\bf (1)} The introduction of TASP as a natural extension of the familiar shortest path problem to settings with bounded edge-weight uncertainty, and its solution by solving the related problems SLB\&SUB. {\bf (2)} The \beast algorithm that finds optimal solutions for SUB. {\bf (3)} The idea that information obtained from solving either SLB or SUB
can be used to enhance the solution of the other (which is demonstrated by \bnb).	

\section{Motivating Example}\label{sec:example}

One of the major sources of uncertainty in planning multi-segment flight routes for logistics drone missions is energy consumption (EC), which is affected by winds, weight, and other factors. 
Optimistic and pessimistic EC bounds can be very useful for better drone scheduling, e.g., by using the bounds to devise greedy/conservative schemes~\cite{jung2022drone}. 
Based on wind vector predictions~\cite{gianfelice2022real} it is possible to estimate EC bounds in a flight segment (graph edge) in multiple ways~\cite{jung2022drone,gianfelice2022real,chan2020procedure}.
These may involve a mix of online queries and computation, and can be used to obtain the solutions to SLB, SUB, and TASP. Here the solution of TASP can be interpreted as the maximum energetic overhead for choosing a conservative route, or in case the tightest lower and upper bounds are both obtained for the same route, then the solution to TASP exactly quantifies its EC uncertainty. 
To sum up:
\begin{itemize}
\item EC uncertainty in drones is often significant, thus bounds on the EC are useful for mission planning.
\item Estimation of EC bounds is possible by various methods. This is an active area of research. 
\item For each flight-segment (edge) we may use multiple methods, varying in computational cost. 
\end{itemize}

Analogous examples exist for robots and unmanned vehicles, for a variety of measures of interest (such as travel time, pollution output and energy consumption) that depend on the medium traversed. There are many methods for estimating those, depending on the application, required precision, and available computational resources: from relatively simple engineering formulas, to physics-based simulators. 
\section{Background and Related Work}\label{sec:related}
This paper belongs to a line of works that consider the run-time of edge-evaluation to be non-negligible, within the context of graph search. It is thus useful to consider the overall edge-evaluation time $T_e$ and its decomposition as $T_e = \tau_{e} \times n_e$, where $\tau_{e}$ is the average edge-evaluation time and $n_e$ is the total number of edge-weight computations conducted over all edges throughout the search. 
In contrast to standard search algorithms that assume that $\tau_{e}$ is negligible and thus focus on pure search time, algorithms in the setting discussed in this paper utilize various techniques in order to reduce $T_e$, sometimes at the expanse of increased search effort (e.g., more node expansions), with the aim to minimize overall run-time.

In the domain of robotics it is common to have shortest path problems in robot configuration spaces, where $\tau_{e}$ is typically \textit{high}, as in these applications edge existence and cost are determined by expensive computations for validating geometric and kinematic constraints. 
A well known technique in these cases is to reduce $n_e$ by explicitly delaying weight computations~\cite{dellin2016unifying,narayanan2017heuristic,mandalika2018lazy,mandalika2019generalized}, even at the cost of increasing search effort. 
Related challenges arise in planning, where action costs can be computed by external (potentially expensive-to-compute) procedures~\cite{dornhege2012semantic,gregory2012planning,ijcai2017p600}, 
 or when multiple heuristics have different run-times~\cite{karpas2018rational}.

There are also approaches that aim to reduce $\tau_{e}$, instead of $n_e$. 
When the graph is too large to fit in random-access memory, it is stored externally (i.e., disk). External-memory graph search algorithms optimize the memory access patterns for edges and nodes, to make better use of faster memory (caching)~\cite{vitter01,hutchinson03,jabbar08,korf2008linear,korf08tbbfs,korf16,sturtevant2016external}. This reduces $\tau_{e}$ by amortizing the computation costs, but still assumes a single computation per edge. 

The approach we take in this paper follows a recent line of work---that is complementary to those described above---which focuses on using multiple \textit{edge estimators}~\cite{weiss2023planning,weiss2023generalization}, specifically for estimating edge-weights. 
In this framework the weight of each edge can be estimated multiple times, successively, at increasing expense for greater accuracy.
The idea is that in some cases cheaper weight estimates can be used instead of the best and most expensive estimates, thus decreasing $\tau_{e}$, and although $n_e$ might increase, the overall edge-evaluation time $T_e$ might still decrease.

Lastly, there are also works that consider weight uncertainty in graphs, regardless of edge-computation time. These include, e.g., the case where weights are assumed to be drawn from probability distributions~\cite{frank1969shortest}, 
and the usage of fuzzy weights~\cite{okada1994fuzzy} that allow quantification of uncertainty by grouping approximate weight ranges to several representative sets. 
All these lines of work ignore the weight \emph{computation time}, in contrast to the work reported here.

\section{Shortest Path with Estimated Weights}\label{sec:prob}
\paragraph{Graph Definitions.}\label{subsubsec:graph_def}
A \emph{weighted digraph} is a tuple $(V, E, c)$, where $V$ is a set of nodes, $E$ is a set of edges, s.t. $e=(v_i,v_j) \in E$ iff there exists an edge from $v_i$ to $v_j$, and $c:E \to \mathbb{R}^+$ is a cost (weight) function mapping each edge to a non-negative number.
Let $v_i$ and $v_j$ be two nodes in $V$.
A \emph{path} $\pi=\langle e_1,\dots, e_n \rangle$ from $v_i$ to $v_j$ is a sequence of edges $e_k=(v_{q_k},v_{q_{k+1}})$ s.t. $k\in [1,n]$, $v_i=v_{q_1}$, and $v_j=v_{q_{n+1}}$.
The cost of a path $\pi$ is then defined to be $c(\pi):=\sum_{k=1}^{n} c(e_k)$. The \emph{Goal-Directed Single-Source Shortest Path}  ($GDS^3P$) 
problem involves finding a \emph{solution}, which is a path $\pi$ from the source node to a goal node, with minimal $c(\pi)$, denoted as $C^*$.
A solution $\pi$ for a $GDS^3P$ problem is said to be a {\bf $\mathcal{B}$-admissible shortest path} if $c(\pi)$ is bounded by a suboptimality factor $\mathcal{B}$, i.e., 
\begin{equation}\label{eq:suboptimality}
c(\pi) \leq C^* \times \mathcal{B}.
\end{equation}
If $\mathcal{B}=1$, then $\pi$ is a shortest path.

We recall several definitions that were introduced in~\cite{weiss2023generalization}.
\begin{definition}\label{def:theta}
	A \emph{\bf cost estimators function} $\Theta$, for a set of edges $E$ and with a cost function $c:E \to \mathbb{R}^+$, maps every edge $e \in E$ to a finite and non-empty sequence of \emph{weight estimation procedures}, 
	\begin{equation}\label{eq:Theta}
	\Theta(e):=(\theta_e^1,\dots,\theta_e^{k(e)}), k(e) \in \mathbb{N},
	\end{equation}
	where {\bf estimator} $\theta_e^i$, if applied, returns lower- and upper- bounds $(l_e^i, u_e^i)$ on $c(e)$, such that $0\leq l_e^i \leq c(e) \leq u_e^i < \infty$). 
	$\Theta(e)$ is ordered by the increasing running time of $\theta_e^i$, and the bounds monotonically tighten, i.e., $[l_e^j, u_e^j] \subseteq  [l_e^i, u_e^i]$ for all $i<j$.
	
\end{definition}

\begin{definition}\label{def:est-graph}
	An \emph{{\bf estimated weighted digraph (EWDG)}} is a tuple $G=(V,E,c,\Theta)$, where $V,E$ are sets of nodes and edges, resp., $c$ is an \emph{un-observable} cost function for the edges in $E$, and $\Theta$ is a {\bf cost estimators function} for $E$.
\end{definition} 

\begin{definition}\label{def:path-bounds}  
For an edge $e$, the \emph{\bf tightest edge lower bound} and \emph{\bf tightest edge upper bound} w.r.t. $\Theta$ are $l_{\Theta(e)}:=l_e^{k(e)}, u_{\Theta(e)}:=u_e^{k(e)}$.
For a path $\pi$, the \emph{\bf tightest path lower bound} and \emph{\bf tightest path upper bound} w.r.t. $\Theta$ follow, respectively, from the tightest edge bounds defined above.
\begin{equation}\label{eq:tighe_path_bounds}
l_{\Theta(\pi)}:=\sum_{i=1}^{n}l_{\Theta({e_i})},~~~~u_{\Theta(\pi)}:=\sum_{i=1}^{n}u_{\Theta({e_i})}
\end{equation}
\end{definition}

Intuitively, $l_{\Theta(\pi)}$ and $u_{\Theta(\pi)}$ are the best estimations that are provided by $\Theta$ for the true cost of the path $\pi$, and they satisfy $l_{\Theta(\pi)} \leq c(\pi) \leq u_{\Theta(\pi)}$.

\paragraph{Shortest Path Problems.}\label{subsubsec:sp_problems}

Regular weighted digraphs are a special case of EWDGs where for every edge $e$, there is a single estimation procedure $\theta_e^1=(c(e),c(e))$ with lower and upper bounds that are identical to the weight $c(e)$. In this special case, a shortest tightly-bounded path $\pi$ in the graph is an optimal solution for a $GDS^3P$ problem.
However, in the general case of EWDGs, multiple estimators exist per edge, and we are not guaranteed that every weight can be estimated precisely, even if all estimators for it are used, as the best estimator available for an edge can still deviate from the exact true cost provided by $c$. Thus, several variants of the shortest path problem exist, which correspond to different tightest bounds for the shortest path.

In this paper we focus on the problem of finding the smallest possible suboptimality factor on the optimal cost $C^*$ as well as finding a path $\pi$ that achieves it, in a given EWDG. 
In order to properly define this problem, we first have to extend the notion of $\mathcal{B}$-admissibility to EWDGs.
Indeed, Inequality~\eqref{eq:suboptimality}, which characterizes the suboptimality factor, requires knowing exact edge costs to obtain $c(\pi)$ and $C^*$, but the cost function $c$ is un-observable in EWDGs. 
Instead, we introduce an extension (Def.~\ref{def:new_admissibility}) that uses edge cost estimates that are provided by $\Theta$. 
The extension relies on $L^*$, which is the best lower bound that can be derived for $C^*$, that was introduced in~\cite{weiss2023generalization}:
\begin{equation}\label{eq:l*}
	L^*:=\min_{\pi'} \{ l_{\Theta(\pi')}~|~\pi'~\text{is a path from}~v_s~\text{to}~v\in V_g\}.
\end{equation}

\begin{definition}\label{def:new_admissibility}
Let $P=(G,v_{s},V_{g})$, where $G$ is an EWDG with cost estimators function $\Theta$, $v_{s} \in V$ is the source node and $V_{g} \subset V$ is a set of goal nodes.
A solution $\pi$ is said to be a {\bf$\mathcal{B}$-admissible shortest path w.r.t. $\Theta$} if the following holds
\begin{equation}\label{eq:new_suboptimality}
u_{\Theta(\pi)} \leq L^* \times \mathcal{B}.
\end{equation}
\end{definition}

Note that $u_{\Theta(\pi)}$ is the tightest upper bound for $c(\pi)$ w.r.t. $\Theta$ (Eq.~\eqref{eq:tighe_path_bounds}), so $c(\pi) \leq u_{\Theta(\pi)}$ holds.
Similarly, $L^*$ is the tightest lower bound for $C^*$ w.r.t. $\Theta$ (Eq.~\eqref{eq:l*}), so $L^* \leq C^*$ holds.
Thus, if Inequality~\eqref{eq:new_suboptimality} holds, then
\begin{equation}
c(\pi) \leq u_{\Theta(\pi)} \leq L^* \times \mathcal{B} \leq C^* \times \mathcal{B}
\end{equation}
is necessarily satisfied.
Namely, standard $\mathcal{B}$-admissibility is assured by $\mathcal{B}$-admissibility w.r.t. $\Theta$.

Next, we define the main problem addressed in this paper.

\begin{problem}[{\bf TASP}, finding $\mathcal{B}^*$]\label{prob:B}
Let $P=(G,v_{s},V_{g})$, where $G$ is an EWDG with cost estimators function $\Theta$, $v_{s} \in V$ is the source node and $V_{g} \subset V$ is a set of goal nodes. Let $\mathcal{B}^*$ be the tightest $\mathcal{B}$-admissibility factor w.r.t. $\Theta$, i.e., 
\begin{equation}\label{eq:B*}
	\mathcal{B}^*:=\min_{\pi'} \{ \mathcal{B}~|~u_{\Theta(\pi')} \leq L^* \times \mathcal{B}, \pi'~\text{is a solution}\}.
\end{equation}
The \emph{\bf Tightest Admissible Shortest Path} (TASP) problem is to find  $\mathcal{B}^*$ as well as a solution $\pi$ that its $\mathcal{B}$-admissibility factor is $\mathcal{B}^*$. If $\mathcal{B}$-admissibility cannot be obtained for any finite value of $\mathcal{B}$, or no solution exists, then $\mathcal{B}^*=\infty$ should be returned.
\end{problem}

Next, we describe two problems that are related to Prob.~\ref{prob:B}.
The first problem among the two deals with finding a shortest path w.r.t. lower bounds. It was introduced in~\cite{weiss2023generalization}, and we review its definition here.

\begin{problem}[{\bf SLB}, finding $L^*$]\label{prob:l}
Let $P=(G,v_{s},V_{g})$, where $G$ is an EWDG with cost estimators function $\Theta$, $v_{s} \in V$ is the source node and $V_{g} \subset V$ is a set of goal nodes. The \emph{\bf Shortest path tightest Lower Bound} (SLB) problem is to find a solution $\pi$, such that $\pi$ has the lowest tightest path lower bound of any path from $v_s$ to $v\in V_{g}$, w.r.t. $\Theta$, i.e., $l(\pi)=L^*$.
\end{problem}

The second problem, which we define here for the first time, is complementary to Prob.~\ref{prob:l} and deals with finding a shortest path w.r.t. upper bounds.
\begin{problem}[{\bf SUB}, finding $U^*$]\label{prob:u}
Let $P=(G,v_{s},V_{g})$, where $G$ is an EWDG with cost estimators function $\Theta$, $v_{s} \in V$ is the source node and $V_{g} \subset V$ is a set of goal nodes. The \emph{\bf Shortest path tightest Upper Bound} (SUB) problem is to find a solution $\pi$, such that $\pi$ has the lowest tightest path upper bound of any path from $v_s$ to $v\in V_{g}$, w.r.t. $\Theta$, i.e., $u(\pi)=U^*$, with
\begin{equation}\label{eq:u*}
	U^*:=\min_{\pi'} \{ u_{\Theta(\pi')}~|~\pi'~\text{is a path from}~v_s~\text{to}~v\in V_g\}.
\end{equation}
\end{problem}

We next show that Problems 1--3 are all generalizations of standard $GDS^3P$ (see Thm.~\ref{thm:generality} below). 
In addition, they are also related to each other, in that their solutions are linked. Indeed, Thm.~\ref{thm:B*} below shows that the best lower bound for $C^*$ (Prob.~\ref{prob:l}) and the best upper bound for $C^*$ (Prob.~\ref{prob:u}) can be used to calculate the best suboptimality factor  $\mathcal{B}^*$ (Prob.~\ref{prob:B}). Thm.~\ref{thm:B*} also shows that an optimal solution path for Prob.~\ref{prob:u} is also an optimal solution path for Prob.~\ref{prob:B}.  The proof of Thm.~\ref{thm:B*} also implies that if $U^*>L^*=0$ it is impossible to prove $\mathcal{B}$-admissibility at all.

\begin{theorem}[Generality]\label{thm:generality}
	Problems~\ref{prob:B},~\ref{prob:l} and \ref{prob:u} are generalizations of a $GDS^3P$ problem.
\end{theorem}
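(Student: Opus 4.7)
The plan is to exhibit a natural embedding of any $GDS^3P$ instance into each of the three problems, so that (i) the instance data is representable in the richer EWDG setting, and (ii) any solution to the corresponding problem yields an optimal solution to the original $GDS^3P$ instance. This is essentially the ``singular-estimator'' reduction that is already hinted at in the paragraph preceding Def.~\ref{def:new_admissibility}, so most of the work is bookkeeping.

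First I would fix an arbitrary $GDS^3P$ instance $(V,E,c,v_s,V_g)$ and construct an EWDG $G=(V,E,c,\Theta)$ by defining, for every edge $e\in E$, the cost estimators function $\Theta(e):=(\theta_e^1)$ with $\theta_e^1$ returning the pair $(c(e),c(e))$. This $\Theta$ trivially satisfies Def.~\ref{def:theta}: the sequence is non-empty and finite, and since there is only one estimator the monotonicity condition is vacuous. With this choice, the tightest edge bounds are $l_{\Theta(e)}=u_{\Theta(e)}=c(e)$, hence for every path $\pi$,
\begin{equation*}
l_{\Theta(\pi)} \;=\; u_{\Theta(\pi)} \;=\; \sum_{i=1}^{n} c(e_i) \;=\; c(\pi).
\end{equation*}

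From this identity the three claims follow almost immediately. For SLB (Prob.~\ref{prob:l}), minimizing $l_{\Theta(\pi)}$ over paths from $v_s$ to $V_g$ coincides with minimizing $c(\pi)$; thus $L^*=C^*$ and any SLB-optimal path is a $GDS^3P$-optimal shortest path. The analogous argument with $u_{\Theta(\pi)}$ in place of $l_{\Theta(\pi)}$ handles SUB (Prob.~\ref{prob:u}), yielding $U^*=C^*$ and identifying SUB-optimal paths with shortest paths. For TASP (Prob.~\ref{prob:B}), substituting $L^*=C^*$ and $u_{\Theta(\pi)}=c(\pi)$ into~\eqref{eq:B*} gives
\begin{equation*}
\mathcal{B}^* \;=\; \min_{\pi'}\bigl\{\mathcal{B}\;|\;c(\pi') \leq C^*\times \mathcal{B},\;\pi'\text{ is a solution}\bigr\} \;=\; 1,
\end{equation*}
attained exactly by the shortest paths, so a TASP solution again recovers a $GDS^3P$ optimum.

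There is no real obstacle here; the only minor care required is to confirm that Def.~\ref{def:theta} is satisfied by the degenerate one-estimator sequence (which it is, since the monotone-tightening condition is a universal quantifier over $i<j$ with an empty index set) and that the corner case $C^*=0$ does not disturb the TASP argument (it doesn't, because then $u_{\Theta(\pi)}=0$ for any shortest path, so inequality~\eqref{eq:new_suboptimality} holds with $\mathcal{B}=1$, and no smaller value is possible by definition). I would end the proof by stating that since every $GDS^3P$ instance admits such an encoding and every optimal solution of the encoded problem solves the original, each of Problems~\ref{prob:B},~\ref{prob:l},~\ref{prob:u} properly generalizes $GDS^3P$.
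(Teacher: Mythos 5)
Your proposal is correct and follows essentially the same route as the paper: the singular-estimator encoding with $l_e^1=c(e)=u_e^1$, yielding $L^*=C^*=U^*$ and $\mathcal{B}^*=1$. The only (cosmetic) difference is in the $C^*=0$ corner case, where the paper resolves $\mathcal{B}^*=1$ by explicit convention rather than arguing it from the definition as you do --- worth noting since when $L^*=0$ the inequality $u_{\Theta(\pi)}\leq L^*\times\mathcal{B}$ holds for \emph{every} $\mathcal{B}$, so the minimum in~\eqref{eq:B*} is not literally forced to be $1$ without that convention.
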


\begin{proof}
	\emph{Any} $GDS^3P$ problem can be formulated as Problem~\ref{prob:B}, or~\ref{prob:l}, or~\ref{prob:u}, by considering the special case where each edge has one estimator (namely,~$k(e)=1$ for every~$e$), that returns the exact cost (i.e.,~$l_e^1 = c(e) = u_e^1$). In this special case it holds that  $L^*=C^*=U^*$ and $\mathcal{B}^*=1$ (in the case of $L^*=U^*$ we set $\mathcal{B}^*=1$ even if $C^*=0$ as there is no uncertainty at all). Therefore, the cost of the solutions of these problems are all equal to the minimum cost $C^*$, hence are by definition shortest paths.
\end{proof}

\begin{theorem}[$\mathcal{B}^*=U^*/L^*$]\label{thm:B*}
Let $P=(G,v_{s},V_{g})$, where $G$ is an EWDG with cost estimators function $\Theta$, $v_{s} \in V$ is the source node and $V_{g} \subset V$ is a set of goal nodes.
If $L^*>0$ for $P$, then a solution path $\pi$ for $P$ is a tightest admissible shortest path iff it is a shortest path tightest upper bound (i.e. a solution that achieves $U^*$). Furthermore, $\mathcal{B}^*=U^*/L^*$.
\end{theorem}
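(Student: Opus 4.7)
My plan is to peel the two-level minimization that defines $\mathcal{B}^*$ into a single-level one, using the hypothesis $L^*>0$ to divide cleanly.

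First I would fix an arbitrary solution path $\pi'$ and examine the inner constraint from Def.~\ref{def:new_admissibility}: the set of $\mathcal{B}$ values that make $\pi'$ itself $\mathcal{B}$-admissible is exactly $\{\mathcal{B} : \mathcal{B} \geq u_{\Theta(\pi')}/L^*\}$, since $L^*>0$ lets me divide the inequality $u_{\Theta(\pi')} \leq L^*\cdot\mathcal{B}$ without flipping its direction. The infimum of this set is attained at $u_{\Theta(\pi')}/L^*$, which is therefore the tightest per-path admissibility factor for $\pi'$.

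Next I would substitute this per-path value into Eq.~\eqref{eq:B*}, pull the positive constant $1/L^*$ outside the minimization over $\pi'$, and recognize the remaining expression as the definition of $U^*$ in Eq.~\eqref{eq:u*}. This yields $\mathcal{B}^* = (1/L^*)\min_{\pi'} u_{\Theta(\pi')} = U^*/L^*$ in one line.

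The iff follows from the same rewriting: because multiplication by the positive constant $1/L^*$ is a strictly monotone rescaling of the per-path objective, the set of minimizers of $\pi' \mapsto u_{\Theta(\pi')}/L^*$ coincides with the set of minimizers of $\pi' \mapsto u_{\Theta(\pi')}$. Hence a solution $\pi$ attains $\mathcal{B}^*$ iff $u_{\Theta(\pi)}=U^*$, i.e., iff $\pi$ solves SUB.

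I do not anticipate a serious obstacle; the proof is essentially bookkeeping. The one piece of care is tracking where the hypothesis $L^*>0$ is actually used (both to divide the admissibility inequality and to make $U^*/L^*$ well-defined as a real number), and confirming that the existence of some path with finite $L^*$ also ensures the $U^*$-minimization admits a minimizer, so the iff has content rather than being vacuous.
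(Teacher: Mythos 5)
Your proposal is correct and follows essentially the same route as the paper's own proof: both reduce the tightest per-path admissibility factor to $u_{\Theta(\pi')}/L^*$ (using $L^*>0$ to divide), observe that $1/L^*$ is a positive constant independent of $\pi'$ so the minimizers of $u_{\Theta(\pi')}/L^*$ and of $u_{\Theta(\pi')}$ coincide, and then invoke the definition of $U^*$ to get both the iff and $\mathcal{B}^*=U^*/L^*$. Your version is, if anything, slightly more explicit about the two-level structure of the minimization and about where $L^*>0$ is needed, but the underlying argument is identical.
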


\begin{proof}
Assume $L^*>0$ for $P$. 
By definition (of TASP and of $\mathcal{B}^*$) a solution $\pi$ is a tightest admissible shortest path iff it achieves the lowest $\mathcal{B}$-admissibility factor, i.e., iff it satisfies 
\begin{equation}\label{eq:argmin_u}
\pi = \argmin_{\pi'} u_{\Theta(\pi')}/L^*.
\end{equation}
Since $L^*$ does not change with different choices of $\pi'$ in the argmin expression of Eq.~\eqref{eq:argmin_u}, it follows that
\begin{equation}\label{eq:argmin_u_final}
\pi = \argmin_{\pi'} u_{\Theta(\pi')}.
\end{equation}
By definition of $U^*$ (Eq.~\eqref{eq:u*}), a solution $\pi$ satisfying Eq.~\eqref{eq:argmin_u_final} achieves $u_{\Theta(\pi)}=U^*$.
On the other hand, a solution satisfying Eq.~\eqref{eq:u*}, also achieves $\mathcal{B}^*$, as implied by Eq.~\eqref{eq:argmin_u}.
Thus, $\pi$ is a tightest admissible shortest path iff it is a shortest path tightest upper bound (SUB).
Furthermore, it holds that $\mathcal{B}^*=U^*/L^*$.
\end{proof}

\begin{corollary}\label{cor:solving_TASP}
Thm.~\ref{thm:B*} shows how to obtain an optimal solution for Prob.~\ref{prob:B} from optimal solutions for Problems~\ref{prob:l} and~\ref{prob:u}.
Thus, any algorithmic improvement to solving either Prob.~\ref{prob:l} or Prob.~\ref{prob:u}
directly affects the efficiency of solving Prob.~\ref{prob:B}.
\end{corollary}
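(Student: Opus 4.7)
The plan is to observe that this corollary is essentially an unpacking of Theorem~\ref{thm:B*} into an algorithmic recipe, so the proof should be short and mostly bookkeeping. First, I would invoke Theorem~\ref{thm:B*} directly: under the hypothesis $L^* > 0$ it asserts both the identity $\mathcal{B}^* = U^*/L^*$ and the characterization that any solution achieving $U^*$ is also a tightest admissible shortest path. These two facts combined give a complete reduction: run any algorithm solving SLB (Prob.~\ref{prob:l}) to obtain $L^*$, run any algorithm solving SUB (Prob.~\ref{prob:u}) to obtain both $U^*$ and a witnessing path $\pi$, and then return the pair $(U^*/L^*,\pi)$ as the output for TASP (Prob.~\ref{prob:B}).

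Second, I would dispose of the corner cases that sit outside the hypothesis $L^*>0$ of Theorem~\ref{thm:B*}. If no path from $v_s$ to $V_g$ exists, both subroutines fail and the recipe returns $\mathcal{B}^* = \infty$, consistent with the definition in Prob.~\ref{prob:B}. If $L^* = 0$ and $U^* = 0$, the convention stated in the proof of Theorem~\ref{thm:generality} sets $\mathcal{B}^* = 1$; if $L^* = 0$ and $U^* > 0$, the remark immediately following the statement of Theorem~\ref{thm:B*} asserts that $\mathcal{B}$-admissibility cannot be obtained for any finite $\mathcal{B}$, so the recipe returns $\mathcal{B}^* = \infty$. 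This finishes the first sentence of the corollary.

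For the second sentence, I would argue that because the recipe treats SLB and SUB solvers as black boxes called in sequence (plus $O(1)$ post-processing for the division), the total runtime is the sum of the runtimes of the two subroutines. Consequently, any strict improvement in the runtime of either subroutine translates into a strict improvement in the composite runtime for solving TASP, which is exactly the claimed efficiency transfer.

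The main obstacle, which is genuinely minor, is the clean handling of the $L^* = 0$ degeneracy: Theorem~\ref{thm:B*} explicitly conditions on $L^* > 0$, so I must stitch together its conclusion with the side remarks made in the text around Theorems~\ref{thm:generality} and~\ref{thm:B*} to cover every case required by the definition of $\mathcal{B}^*$ in Prob.~\ref{prob:B}. Beyond this case-split, the corollary is a direct repackaging of Theorem~\ref{thm:B*} and carries no further mathematical content.
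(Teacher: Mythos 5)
Your proposal is correct and matches the paper's (implicit) argument: the paper offers no separate proof for this corollary, treating it as an immediate consequence of Thm.~\ref{thm:B*}, which is exactly the reduction you describe. Your explicit handling of the degenerate cases ($L^*=0$ with $U^*>0$, $L^*=U^*$, and no solution existing) is a faithful unpacking of how the paper itself disposes of them later in Alg.~\ref{alg:bnb}, so it adds care rather than a different route.
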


Next, we use an example taken from~\cite{weiss2023generalization} with slight modification and supplement it to illustrate the meaning of the solutions for Problems 1--3.

\begin{example}\label{example:graph}
	Consider the estimated weighted digraph $G=(V,E,c,\textit{}\Theta)$ provided in Fig.~\ref{plot:example1}.
Given the graph above, we may define the problem $P=(G,v_s,V_g)$ with $v_s=v_0$ and $V_g=\{v_3,v_4\}$, i.e., searching for paths from $v_0$ to either $v_3$, or $v_4$. 
Then, the unknown optimal cost is $C^*=c(\pi^*)=c_{01}+c_{14}=9$ with $\pi^*=\langle e_{01},e_{14} \rangle$; 
the tightest lower bound for $C^*$ is $L^*=l_{\Theta(\pi_1)}=l_{02}^2+l_{24}^1=7$ with $\pi_1=\langle e_{02},e_{24} \rangle$ (the SLB solution);
the tightest upper bound for $C^*$ is $U^*=u_{\Theta(\pi_1)}=u_{01}^1+u_{14}^2=10$ with $\pi_2=\pi^*$ (the SUB solution);
and tightest admissibility factor is $\mathcal{B}^*=U^*/L^*=10/7$ with $\pi_3=\pi_2$ (the TASP solution).	
	\begin{figure}[t]
		\centering
		\includegraphics[width=1\columnwidth]{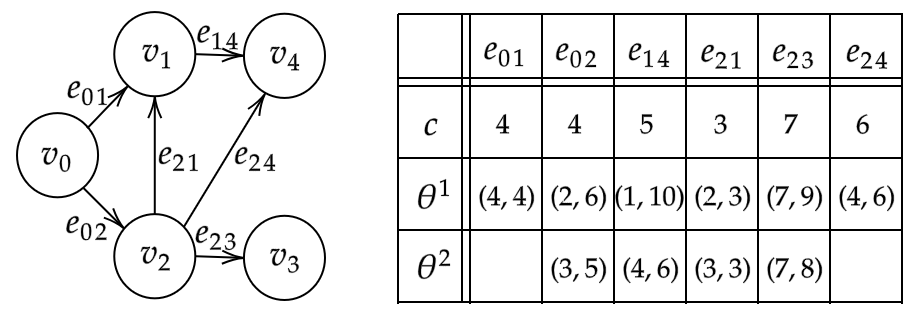}
		\caption{Left: Digraph $G$. Right: costs and estimates of $G$.}
		\label{plot:example1}
	\end{figure}	
\end{example}

\section{Algorithms}\label{sec:algs}
As indicated in Corollary~\ref{cor:solving_TASP}, we can obtain optimal solutions for TASP problems by optimally solving the corresponding SLB and SUB problems.
SLB was studied in~\cite{weiss2023generalization}, which introduced \beauty, a complete algorithm that finds optimal solutions for it. Hence, this section focuses on solving SUB, particularly towards solving TASP in a manner that takes advantage of information already obtained during the solution of SLB.

In Subsection~\ref{subsec:beast} we present \beast (Branch\&bound Estimation Applied to Searching for Top, Alg.~\ref{alg:beast}), a complete algorithm that finds optimal solutions for SUB problems. 
\beast extends \ucs to dynamically apply cost estimators during a best-first search w.r.t. upper bounds of edge costs, and it aims to reduce the number of expensive estimators used, by utilizing both cheaper estimates and prior information on $U^*$.
\beast is thus particularly suitable to be used whenever prior information on $U^*$ is available.

We note that \beast has several analogies to \beauty, but also several fundamental differences which are due to the fact that \emph{upper bounds tighten towards lower values} whereas \emph{lower bounds tighten towards higher values}.
Thus, cheaper (and looser) estimates cannot be used analogously to guide the search in minimization of tight upper bounds (SUB) compared to minimization of tight lower bounds (SLB).
Specifically, \beauty first uses a looser lower bound to guarantee that a more expensive lower bound estimate is required. 
This can be done since if a path $\pi_1$ to a node $n$ and its tightest path lower bound $l_{\Theta(\pi_1)}$ are known, and a new path $\pi_2$ to $n$ is considered, then if a loose lower bound of $c(\pi_2)$ is already greater than $l_{\Theta(\pi_1)}$, then necessarily $l_{\Theta(\pi_2)}>l_{\Theta(\pi_1)}$ (i.e., $\pi_1$ is better). Hence, more expensive estimates for $\pi_2$ can be avoided.
In contrary, this cannot be done analogously with looser upper bounds. Indeed, knowing that a loose upper bound of $c(\pi_2)$ is already greater than $u_{\Theta(\pi_1)}$ does not imply that $u_{\Theta(\pi_2)}>u_{\Theta(\pi_1)}$.  Hence, more expensive estimates for $\pi_2$ cannot be avoided in the same way.
For this reason, \beast makes use of a combination of upper and lower bounds to try to avoid expensive estimates (see full description in Subsection~\ref{subsec:beast}).

In Subsection~\ref{subsec:bnb} we introduce \bnb (Alg.~\ref{alg:bnb}), a complete algorithm that finds optimal solutions for TASP problems, that first uses \beauty to optimally solve SLB, and then uses \beast, together with prior information already obtained on $U^*$, to optimally solve SUB.

\begin{algorithm}[tb]
	\caption{\beast}
	\label{alg:beast}
	\textbf{Input}: Problem~$P=(G,v_{s},V_{g})$, where $G$ is an estimated weighted digraph with cost estimators function $\Theta$\\
	\textbf{Parameter}: Threshold $u_{prune}$\\
	\textbf{Output}: Path $\pi$, bound $U^*$
	\begin{algorithmic}[1] 
		\STATE $g_u(s_0) \gets 0$; OPEN $\gets \emptyset$; CLOSED $\gets \emptyset$;
		\STATE Insert $s_0$ into OPEN with key $g_u(s_0)$
		\WHILE {OPEN $\neq \emptyset$}
		\STATE $n \gets$ Pop node $n$ from OPEN with minimal $g_u(n)$
		\IF {$Goal(n)$}
		\RETURN {$trace(n), g_u(n)$} \COMMENT{return $\pi, U^*$}
		\ENDIF
		\STATE Insert $n$ into CLOSED 
		\FOR {\textbf{each} successor $s$ of $n$}
		\IF {$s$ not in OPEN $\cup$ CLOSED}
		\STATE $g_u(s) \gets \infty$
		\ENDIF
		\STATE $l(e) \gets 0, u(e) \gets 0$
		\WHILE {$g_u(n) + l(e) < g_u(s)~\AND$ {\color{blue} {$g_u(n) + l(e) \le u_{prune}~\AND$}} estimators remain for $e=(n,s)$}
		\STATE $l(e), u(e) \gets$ Apply next estimator for $e$
		\ENDWHILE
		\STATE $\tilde{g}_u \gets g_u(n) + u(e)$
		\IF {$\tilde{g}_u < g_u(s)$ { \color{blue} $\AND~\tilde{g}_u \leq u_{prune}$}}
		\STATE $g_u(s) \gets \tilde{g}_u$
		\IF {$s$ in OPEN}
		\STATE Remove $s$ from OPEN
		\ENDIF
		\STATE Insert $s$ into OPEN with key $g_u(s)$ and parent $n$
		\ENDIF
		\ENDFOR
		\ENDWHILE
		\RETURN {$\emptyset, \infty$}
	\end{algorithmic}
\end{algorithm}	

\begin{example}\label{example:beast_base}
	Consider calling \beast with $u_{prune}=\infty$ (i.e., base setting) on $P$ from Example~\ref{example:graph}. Tracing its run, at the first iteration of the outer while loop $v_0$ is removed from OPEN, $\theta_{e_{01}}^1, \theta_{e_{02}}^1$ and $\theta_{e_{02}}^2$ are invoked, and $v_1, v_2$ are inserted to OPEN with keys $4, 5$. At the second iteration $v_1$ is removed from OPEN, $\theta_{e_{14}}^1, \theta_{e_{14}}^2$ are invoked, and $v_4$ is inserted to OPEN with key $10$. At the third iteration $v_2$ is removed from OPEN, $\theta_{e_{23}}^1, \theta_{e_{23}}^2$ and $\theta_{e_{24}}^1$ are invoked, and $v_3$ is inserted to OPEN with key $13$. At the forth iteration $v_4$ is removed from OPEN and \beast returns $\langle e_{01},e_{14} \rangle, 10$.
\end{example}

\subsection{The First Algorithm: \beast}\label{subsec:beast}

Algorithm~\ref{alg:beast} receives an SUB problem instance and one hyper-parameter $u_{prune}$. For simplicity we will first describe a {\em base case} where $u_{prune}$ is set to $\infty$, and therefore has no effect and can be ignored. The relevant instructions using $u_{prune}$ are colored in blue (Lines 12, 15) and should be ignored for now. We will come back to this parameter later. 

\paragraph{Base Setting.} \beast is structurally similar to \ucs. It activates a best-first search process using the standard OPEN and CLOSED lists. Nodes $n$ in OPEN are prioritized by $g_u(n)$ which is always equal to the optimal upper bound to node $n$ along the best known path (similar to using $g(n)$ for ordering nodes in \ucs in regular graphs, which is done according to optimal cost). The {\em best} such node $n$ is chosen for expansion in Line 4, and its successors are added in the loop of Lines 8--19. When a goal node $n$ is found in Line 5, the solution path $\pi$ ending in $n$ and its tight upper bound $g_u(n)=U^*$ are returned (Thm.~\ref{thm:conditional_optimality_u}).

The main difference  of \beast over \ucs is in the {\em duplicate detection} mechanism performed when evaluating the cost of a new edge $e$ that connects $n$ to its successor $s$. In \ucs, the exact edge cost $c(e)$ is immediately obtained and used to update the path cost that ends in $s$. In \beast, we iterate over the different estimators $\theta_e^i$ for edge $e$ (Lines 12--13). In each iteration $g_u(n) + l(e)$ serves as a lower bound for the tightest path upper bound of the path to $s$ given the current estimator (Line 13). Namely, the tight upper bound is used up to node $n$ and a lower bound is used for the edge $e$ from $n$ to $s$. Now such a path can be already pruned earlier if its current lower bound for the tight upper bound (using the current estimator) will not improve the best known path to $s$ ($g_u(s)$). In that case we will not need to further activate more expensive estimators. Thus, if $g_u(n) + l(e) \ge g_u(s)$, the while statement (Line 12) ends. Then, ordinary duplicate detection is performed in Lines 14--19. 

We emphasize that in case the while loop (Line 12) terminates due to $g_u(n) + l(e) \ge g_u(s)$ being satisfied, then necessarily $\tilde{g}_u = g_u(n) + u(e) \ge g_u(s)$ will be satisfied as well and the path will be (justifiably) pruned. On the other hand, notice that $u(e)$ cannot be used in place of $l(e)$ for early pruning, since upper bound estimates tighten towards lower values. See Example~\ref{example:beast_base} for a demonstration of using \beast in its base setting.

\begin{remark}
In its base setting \beast eventually uses the best estimates for edges leading to new nodes, thus it can be modified to directly jump to the best estimates in these cases. This was left out for simplicity of the pseudo-code.
\end{remark}

\paragraph{Enhanced Setting.} We now consider the enhanced setting where $u_{prune}$ is set to some constant value (not $\infty$). In this setting $u_{prune}$ serves as an upper threshold that limits the search, similarly to {\em bounded cost search}~\cite{SternFBPSG14}. This manifests in two aspects.
First, $u_{prune}$ is used as an upper threshold to exit the while loop (Line 12) and avoid activating more expensive estimators in case the tight upper bound to $s$ is determined to be greater than $u_{prune}$. This is done in Line 12 where the condition $g_u(n) + l(e) \le u_{prune}$ is tested, and if not fulfilled it breaks the loop. As explained in the base setting, $g_u(n) + l(e)$ serves as a lower bound to the tight upper bound to $s$ and can thus facilitate early stopping.
Second, $u_{prune}$ is used as an upper threshold to prune (and not add to OPEN) any node with upper bound $>u_{prune}$. This is done in Line 15.

The primary purpose of using $u_{prune}$ with $U^* \le u_{prune}<\infty$ is to avoid applications of redundant (and potentially expensive-to-compute) estimators. Additionally, it can decrease the size of OPEN, which implies less insertion operations and cheaper insert/delete operations.
Since $U^*$ is unknown, setting this hyper-parameter to a meaningful value requires prior information.
Practically, such information can be achieved by obtaining a suboptimal solution with $u_{sub} \ge U^*$, and using it to set $u_{prune}=u_{sub}$.

Finally, in case $U^* \le u_{prune}<\infty$ is satisfied then a solution path $\pi$ will be found (if a solution exists) and it will be returned together with its tight upper bound $g_u(n)=U^*$ (Thm.~\ref{thm:conditional_optimality_u}). On the other hand, in case $u_{prune} < U^*$ is satisfied then $\emptyset, \infty$ are returned (Thm.~\ref{thm:soundness_u}). See Example~\ref{example:beast_enhanced} for a demonstration of using \beast in its enhanced setting.

\begin{example}\label{example:beast_enhanced}
	Consider calling \beast with $u_{prune}=4$ (i.e., enhanced setting, $u_{prune}<U^*=10$) on $P$ from Example~\ref{example:graph}. Tracing its run, at the first iteration of the outer while loop $v_0$ is removed from OPEN, $\theta_{e_{01}}^1, \theta_{e_{02}}^1$ and $\theta_{e_{02}}^2$ are invoked, and $v_1$ is inserted to OPEN with key $4$. At the second iteration $v_1$ is removed from OPEN, $\theta_{e_{14}}^1$ is invoked. At the third iteration OPEN is empty and \beast returns $\emptyset, \infty$.

	Now consider calling \beast with $u_{prune}=11$ (i.e., enhanced setting, $u_{prune} \ge U^*=10$) on $P$ from Example~\ref{example:graph}. 
	Its run is identical to the run from Example.~\ref{example:beast_base}, with the same output, except that at the third iteration $\theta_{e_{23}}^2$ is not invoked.
\end{example}

Next, we provide the theoretical guarantees for \beast.

\begin{theorem}[Conditional Completeness and Optimality, Prob.~\ref{prob:u}]\label{thm:conditional_optimality_u}
\beast, with $u_{prune} \ge U^*$, returns a shortest path tightest upper bound $\pi$ and $U^*$, if a solution exists for $P$.
In case no solution exists, \beast returns $\emptyset, \infty$.
\end{theorem}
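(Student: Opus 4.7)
The plan is to reduce the analysis to the familiar correctness argument for \ucs, treating \beast as \ucs on a graph whose effective edge weights are the tight upper bounds $u_{\Theta(e)}$, restricted to paths whose tight upper bound does not exceed $u_{prune}$. The two things that need justification beyond textbook \ucs are (i) the inner loop at Lines~12--13 that applies estimators lazily, and (ii) the interaction of the pruning threshold $u_{prune}$ with completeness when $u_{prune}\ge U^*$.

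First I would establish a \emph{safety lemma} for the inner while loop. Consider an iteration evaluating edge $e=(n,s)$. The loop may exit in three ways: (a) no estimators remain, in which case $u(e)$ is the true tightest upper bound $u_{\Theta(e)}$; (b) $g_u(n)+l(e)\ge g_u(s)$; or (c) $g_u(n)+l(e)>u_{prune}$. Because $l(e)\le u(e)$ along the entire estimator sequence, in cases (b) and (c) we immediately get $\tilde g_u=g_u(n)+u(e)\ge g_u(n)+l(e)$, and thus the conjunction on Line~15 fails (either $\tilde g_u\ge g_u(s)$ or $\tilde g_u>u_{prune}$), so $g_u(s)$ is not updated. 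Hence the only time \beast actually changes $g_u(s)$ through edge $e$, the value $u(e)$ it uses is necessarily the true $u_{\Theta(e)}$ (case (a)). Consequently, whenever a value is stored in $g_u(s)$, it equals the tight path upper bound $u_{\Theta(\pi')}$ of some concrete path $\pi'$ from $v_s$ to $s$ whose every prefix's tight upper bound is $\le u_{prune}$.

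Next I would prove the main invariant by induction on the expansion order, mirroring the standard proof of \ucs: at the moment a node $n$ is popped in Line~4, $g_u(n)=\min\{u_{\Theta(\pi)} : \pi$ a path from $v_s$ to $n$ with $u_{\Theta(\pi')}\le u_{prune}$ for every prefix $\pi'$ of $\pi\}$, provided such a path exists (otherwise $n$ is never popped). The induction uses that $g_u$-keys are non-decreasing along popped nodes (because edge upper bounds are non-negative), and the safety lemma to ensure that the $g_u$-values placed into OPEN are genuine tight upper bounds of existing paths. The standard \ucs-style exchange argument---any path improving on $g_u(n)$ must cross an unexpanded frontier node whose key is $\le g_u(n)$---carries over verbatim, because the frontier values are valid tight upper bounds of prefix paths and those prefixes satisfy the $u_{prune}$ constraint.

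Given the invariant, I would conclude as follows. Assume a solution exists for $P$ and $u_{prune}\ge U^*$. Let $\pi^\star$ be a path attaining $U^*$; since upper bounds are non-negative, every prefix $\pi'$ of $\pi^\star$ satisfies $u_{\Theta(\pi')}\le U^*\le u_{prune}$, so $\pi^\star$ is admissible under the prefix constraint and hence reaches some goal in the restricted search. By the invariant, the first goal node popped (Line~5) has $g_u$-key exactly equal to $U^*$, and Line~6 returns the corresponding path together with $U^*$. Conversely, if no solution exists in $P$, then no goal is ever reachable via any path, so Line~5 never fires; OPEN is eventually exhausted and Line~24 returns $\emptyset,\infty$. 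The only delicate step, and the one I would spend most care on, is the safety lemma: it is where the asymmetry noted in the paper between lower and upper bounds is handled, and it is essential to ensure that the lazy estimator evaluation never commits \beast to a spurious $g_u$-value that could compromise the \ucs-style optimality argument.
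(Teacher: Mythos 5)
Your proposal is correct and follows essentially the same route as the paper's proof: both reduce the argument to standard \ucs correctness over the tight upper bounds, with the key extra step being that lazy estimator application never corrupts a $g_u$-value or wrongly prunes a path (your ``safety lemma'' is exactly the paper's observation that exiting the inner loop via $g_u(n)+l(e)\ge g_u(s)$ or $g_u(n)+l(e)>u_{prune}$ forces the Line~15 test to fail, which the paper packages as ``systematicity'' of the search up to $u_{prune}$). Your version is somewhat more explicitly structured (isolating the lemma and the induction on expansion order), but the underlying ideas and their roles are the same.
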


\begin{proof}
First, it is straightforward to see that every node encountered by \beast after the initial node is a successor of another node encountered during the search, as new nodes are only introduced in Line 8. Additionally, every node inserted into OPEN (except the initial node) is saved with a pointer to its parent node. Hence, whenever a goal node is found (at Line 5), it can necessarily be traced back to the initial node via a series of connected nodes, i.e., a valid solution path is returned at Line 6.

Second, \beast inspects nodes that are removed from OPEN by best-first order w.r.t. upper bound of path cost.
Since finding a goal node at Line 5 terminates the search, it is assured that the path leading to the first goal node found will be returned. 
Hence, the solution returned necessarily has the best (lowest) upper bound of path cost out of all the paths inspected by \beast.

It remains to be shown that the search is systematic, namely that every path with tight upper bound up to $u_{prune}$ is inspected by \beast; and that every edge encountered during the search is either tightly (fully) estimated, or it is at least estimated in a manner that enables \beast to determine that it is not part of the solution path.

Consider any successor $s$ of a node $n$ that is popped from OPEN. If the node $s$ is encountered for the first time, then at first $g_u(s) \gets \infty$ is set at Line 10, so the condition $g_u(n) + l(e) < g_u(s)$ at Line 12 is never satisfied. This means that the edge $e$ leading to $s$ will either be tightly (fully) estimated in case the current path to $s$ has tight upper bound smaller or equal to $u_{prune}$, or otherwise the tight upper bound to $s$ is greater than $u_{prune}$. Since $u_{prune} \ge U^*$ is assumed, this means that this path is not relevant and can be safely ignored. Indeed, in this case it is rightfully pruned in Line 15.
If the node $s$ was already encountered earlier in the search, then the same mechanism described above applies, but additionally we have to consider the condition $g_u(n) + l(e) < g_u(s)$ at Line 12. In case it is not satisfied then this means that a better path was already found to $s$ so it can be safely ignored, and indeed it is pruned in Line 15.

Lastly, since we have shown that the search up to tight upper bound $u_{prune}$ is systematic, and considering that each edge has a finite number of estimators where each of them has finite run-time, \beast necessarily terminates in finite time either when an optimal solution is found and returned (Line 6) with $U^*$, or when the search is exhausted up to tight upper bound $u_{prune}$ and \beast reports that no solution exists (Line 20). Note that in the latter case the assumption $u_{prune} \ge U^*$ implies that $u_{prune}$ must have been set to $\infty$, so no solution at all exists.
Hence, if $u_{prune} \ge U^*$ holds, \beast is complete and returns an optimal solution.
\end{proof}

\begin{theorem}[Soundness, Prob.~\ref{prob:u}]\label{thm:soundness_u}
For any value of $u_{prune}$, if $\emptyset, \infty$ are returned by \beast then no solution exists for $P$ with tight upper bound that is smaller or equal to $u_{prune}$. Conversely, if \beast returns a solution then it is correct.
\end{theorem}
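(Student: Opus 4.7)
The plan is to split the claim into its two halves and handle them somewhat asymmetrically. The second sentence (a returned non-empty answer is correct) is essentially a structural check on the bookkeeping of \beast. The first sentence (return of $\emptyset,\infty$ rules out any solution with tight upper bound $\le u_{prune}$) is a systematicity argument that parallels the one used inside the proof of Theorem~\ref{thm:conditional_optimality_u}, but this time one cannot appeal to the assumption $u_{prune}\ge U^*$, so the statement is necessarily relativized to $u_{prune}$.

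For the second sentence, I would suppose that \beast exits at Line 6 with output $trace(n), g_u(n)$, and first argue that $trace(n)$ is a valid path from $s_0$ to $n$: parent pointers are installed only at Line 18, always pointing to the node just popped at Line 4, so by induction on OPEN-insertions they chain back to $s_0$. I would then establish the invariant that whenever Line 16 sets $g_u(s)\gets g_u(n)+u(e)$, the value $u(e)$ equals the tightest edge upper bound $u_{\Theta(e)}$. The reason is that Line 15 guards the update by $\tilde g_u<g_u(s)$ and $\tilde g_u\le u_{prune}$; if the inner loop at Line 12 had terminated via $g_u(n)+l(e)\ge g_u(s)$ or $g_u(n)+l(e)>u_{prune}$, then $\tilde g_u=g_u(n)+u(e)\ge g_u(n)+l(e)$ would immediately violate one of those guards, so no update would occur. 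Hence every actual update consumes the full estimator list, so $u(e)=u_{\Theta(e)}$. Combined with the parent-pointer structure, this yields by induction on successful updates that $g_u(n)=u_{\Theta(\pi)}$ for the returned path $\pi$, so the returned pair is correct.

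For the first sentence I would argue the contrapositive: if some solution path $\pi^\star$ from $s_0$ to a goal $v\in V_g$ satisfies $u_{\Theta(\pi^\star)}\le u_{prune}$, then \beast does not return $\emptyset,\infty$. Walking along $\pi^\star$ from $s_0$, I would prove by induction on prefix length that the endpoint $v_k$ of each prefix is eventually inserted into OPEN with $g_u(v_k)$ at most the tightest upper bound of some path from $s_0$ to $v_k$ staying within $u_{prune}$. The inductive step reduces to two cases on the inner loop at Line 12: either the edge from $v_{k-1}$ to $v_k$ is fully estimated so Line 15 admits the update, or the loop terminates early, but early termination requires $g_u(v_{k-1})+l(e)\ge g_u(v_k)$, which means $v_k$ already holds an at-least-as-good candidate. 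In neither case is $v_k$ dropped from OPEN for good. Since \beast pops nodes in best-first order of $g_u$ and every queued node has finite key, a goal prefix of $\pi^\star$ is eventually popped, triggering Line 6 rather than Line 20.

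The main obstacle I expect is the careful accounting inside the inner while loop: $l(e)$ and $u(e)$ refer to the currently applied (not necessarily tightest) estimator, yet the Line 15 update must fire only with the tight $u_{\Theta(e)}$, and the threshold check against $u_{prune}$ must be shown to never discard a prefix of a genuinely admissible within-threshold solution. Both points reduce to the observation $l(e)\le u(e)$ for every single estimator, so the early-exit conditions on $l(e)$ transfer directly to $\tilde g_u$ and guarantee that any premature update is blocked by Line 15. Once this invariant is pinned down, both halves of the theorem follow by the two short inductions sketched above.
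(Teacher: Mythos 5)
Your proof is correct and rests on the same two pillars as the paper's own argument: the systematicity of the search up to the threshold $u_{prune}$ (for the first sentence) and the integrity of the $g_u$ bookkeeping (for the second), which the paper compresses into a reference back to the proof of Thm.~\ref{thm:conditional_optimality_u}. Your explicit invariant that the guard at Line 15, combined with $l(e)\le u(e)$ for each estimator, forces every successful update to use the fully estimated $u_{\Theta(e)}$ is precisely the point the paper leaves implicit, and your direct contrapositive merely repackages the paper's case split on $u_{prune}$ versus $U^*$ without changing the substance of the argument.
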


\begin{proof}
Following the proof of Thm.~\ref{thm:conditional_optimality_u}, since the search is systematic up to tight upper bound $u_{prune}$, it follows that if $u_{prune} < U^*$ holds then all paths with tight upper bound greater than $u_{prune}$ will be pruned, and since every solution has at least tight upper bound $U^*$, the search will necessarily terminate with $\emptyset, \infty$. The other direction, i.e., $u_{prune} \ge U^*$, is assured by Thm.~\ref{thm:conditional_optimality_u}.
\end{proof}

\begin{algorithm}[tb]
	\caption{\bnb}
	\label{alg:bnb}
	\textbf{Input}: Problem~$P=(G,v_{s},V_{g})$, where $G$ is an estimated weighted digraph with cost estimators function $\Theta$\\
	\textbf{Output}: Path~$\pi^*$, bound~$\mathcal{B}^*$
	\begin{algorithmic}[1] 
		\STATE $\pi_{SLB}, L^* \gets$ Solve SLB for $P$ using \beauty
		\IF {$\pi_{SLB} = \emptyset$}
		\RETURN {$\emptyset, \infty$} 
		\ENDIF
		\STATE $u(\pi_{SLB}) \gets u_{\Theta(\pi_{SLB})}$ \COMMENT{Get the upper bound of $\pi_{SLB}$}
		\IF {$L^* = u(\pi_{SLB})$}
		\RETURN {$\pi_{SLB}, 1$} 
		\ENDIF
		\STATE $\pi^*, U^* \gets$ Solve SUB for $P$ using \beast, $u(\pi_{SLB})$
		\IF {$L^* = 0~\AND~U^* > 0$}
		\RETURN {$\pi^*, \infty$} 
		\ENDIF
		\RETURN {$\pi^*$, $U^* / L^*$}
	\end{algorithmic}
\end{algorithm}

\subsection{The Second Algorithm: \bnb}\label{subsec:bnb}

\begin{table*}[t]
	\centering
	\begin{tabular}{|c | c |c | c | c | c|} 
  \hline
		Domain      & Instances & $\theta^{max}$ Reduction & Extra $\theta^{max}$ Reduction & Pruned Nodes & $\mathcal{B}^*$\\
		\hline
		Barman      &     135      & 39.46$\pm$3.22 & 7.00~$\pm$~9.03 & 2.20~$\pm$~2.80 & 1.49$\pm$0.12  \\ 
		Caldera     &     135      & 17.91$\pm$2.62 & 27.16$\pm$~5.20 & 8.28~$\pm$~1.49 & 1.52$\pm$0.13  \\
		Elevators   &     135      & 70.79$\pm$4.14 & 69.24$\pm$24.73 & 30.34$\pm$15.07 & 1.56$\pm$0.26  \\
		Settlers    &     81       & 36.08$\pm$3.73 & 33.31$\pm$~8.77 & 16.26$\pm$~2.48 & 1.52$\pm$0.13  \\
		Sokoban     &     135      & 44.55$\pm$6.73 & 17.69$\pm$25.95 & 5.47~$\pm$~8.31 & 1.54$\pm$0.21  \\
		Tetris      &     135      & 36.52$\pm$4.62 & 28.81$\pm$12.32 & 14.82$\pm$~7.21 & 1.54$\pm$0.20  \\
		Transport   &     135      & 50.55$\pm$3.36 & 62.40$\pm$19.79 & 17.95$\pm$~5.47 & 1.52$\pm$0.15  \\
		\textbf{All domains (avg$\pm$std)} & \textbf{891} & \textbf{42.64$\pm$15.88}& \textbf{35.08$\pm$27.69} & \textbf{13.40$\pm$11.75} & \textbf{1.53$\pm$0.18}\\
		\textbf{All domains (min--max)} & \textbf{891} & \textbf{14.28--79.43}& \textbf{0.07--98.37} & \textbf{0.02--60.12} & \textbf{1.03--2.47}\\
		\hline
	\end{tabular}
	\caption{Summarized performance data of \beast$(\infty)$, \beast$(u(\pi_{SLB}))$, with breakdown by domains. $\theta^{max}$ is the number of expensive estimators.
Column 3 refers to $1-(\theta^{max}$(\beast$(\infty)$)$/\theta^{max}$(\eiucs)), Column 4 refers to $1-(\theta^{max}$(\beast$(u(\pi_{SLB}))$)$/\theta^{max}$(\beast$(\infty)$)), Column 5 refers to pruned nodes out of generated nodes for \beast$(u(\pi_{SLB}))$, Column 6 refers to $\mathcal{B}^*$. 
The entries for each domain (Rows 2--8) show average~$\pm$~standard deviation. Cumulative results show average~$\pm$~standard deviation (Row 9) and minimum--maximum (Row 10). All results are in percentages, except $\mathcal{B}^*$ values.}\label{table:results}
\end{table*}

Algorithm~\ref{alg:bnb} receives a TASP problem instance, and returns an optimal solution for it (Thm.~\ref{thm:complete_opt_B}). It works by first calling \beauty on $P$ (Line 1). If no solution is found then by the completeness of \beauty no solution at all exists, and $\emptyset, \infty$ are returned (Lines 2--3).
Otherwise, the tightest upper bound of the solution found by \beauty, $u(\pi_{SLB})$, is obtained (Line 4).
In case $L^* = u(\pi_{SLB})$ then necessarily $L^*=C^*=U^*$ and then $\pi_{SLB}$ and $\mathcal{B}^*=1$ are returned (Line 5--6).
Otherwise, \beast is called on $P$ (Line 7) with $u_{prune}=u(\pi_{SLB})$, which is necessarily greater or equal to $U^*$, and thus by Thm.~\ref{thm:conditional_optimality_u} a shortest path tightest upper bound is returned.
If $U^* > L^* = 0$ then the shortest path tightest upper bound that was found, $\pi^*$, is returned together with $\infty$ (Line 8--9). Otherwise $\pi^*$ and $\mathcal{B}^*=U^*/L^*$ are returned.


\begin{remark}
Note that in Alg.~\ref{alg:bnb} \beauty can be replaced by \emph{any} algorithm that solves SLB optimally.
Moreover, it can be replaced by an anytime algorithm, and then each time the lower bound or upper bound are improved (tightened) they can be translated to a tightened $\mathcal{B}$.
\end{remark}

\begin{theorem}[Completeness and Optimality Prob.~\ref{prob:B}]\label{thm:complete_opt_B}
	\bnb is complete. Furthermore, if a solution exists for $P$ then a tightest admissible shortest path $\pi$ and $\mathcal{B}^*$ are returned (if $U^* > L^* = 0$ holds, then $\mathcal{B}^*=\infty$).
\end{theorem}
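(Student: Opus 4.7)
The plan is to prove completeness and optimality of \bnb by tracing its return paths and invoking prior results: the completeness and optimality of \beauty on SLB from~\cite{weiss2023generalization}, Thm.~\ref{thm:conditional_optimality_u} for \beast, and Thm.~\ref{thm:B*} linking $\mathcal{B}^*$ to $U^*/L^*$.

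I would first establish termination and completeness together. The call to \beauty in Line~1 terminates and returns $\pi_{SLB}=\emptyset$ exactly when $P$ has no solution; in that case \bnb correctly outputs $\emptyset,\infty$. Otherwise $\pi_{SLB}$ is a valid solution with $l(\pi_{SLB})=L^*$ and $u(\pi_{SLB})<\infty$, so the subsequent \beast call with $u_{prune}=u(\pi_{SLB})$ also terminates by Thm.~\ref{thm:conditional_optimality_u}. Hence \bnb always halts and reports no solution iff none exists.

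The pivotal observation underpinning optimality is that $u(\pi_{SLB})\ge U^*$: because $\pi_{SLB}$ is itself a valid solution, the minimum over all solutions in the definition of $U^*$ (Eq.~\eqref{eq:u*}) is at most $u(\pi_{SLB})$. Therefore the threshold handed to \beast satisfies $u_{prune}\ge U^*$, and Thm.~\ref{thm:conditional_optimality_u} guarantees that the SUB call returns a genuine optimum $(\pi^*,U^*)$.

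I would then finish by case analysis on which return line fires. If the $L^*=u(\pi_{SLB})$ test at Line~6 triggers, the chain $L^*\le C^* \le U^* \le u(\pi_{SLB}) = L^*$ collapses to $L^*=U^*$; by the convention recorded in the proof of Thm.~\ref{thm:generality} this gives $\mathcal{B}^*=1$, and since $u_{\Theta(\pi_{SLB})}=U^*$ makes $\pi_{SLB}$ an SUB optimum, it is also a TASP optimum. If the $L^*=0,\ U^*>0$ branch fires, no finite $\mathcal{B}$ can satisfy $u_{\Theta(\pi')}\le L^*\cdot\mathcal{B}=0$ for any solution (each has $u_{\Theta(\pi')}\ge U^*>0$), so Prob.~\ref{prob:B} requires $\mathcal{B}^*=\infty$, exactly what is returned. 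Otherwise control reaches the final line with $L^*>0$, and Thm.~\ref{thm:B*} applies verbatim to certify $\mathcal{B}^*=U^*/L^*$ with $\pi^*$ realising it. The main obstacle I anticipate is the edge-case bookkeeping around $L^*=0$: I would need to argue carefully that when the final line is reached, $L^*=0$ cannot occur (so the division $U^*/L^*$ is well-defined), and to verify that the $L^*=U^*=0$ degenerate situation is consistently handled by one of the earlier exits. The remainder is routine invocation of the cited theorems.
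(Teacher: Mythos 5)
Your proposal is correct and takes essentially the same approach as the paper, whose entire proof is the single sentence that the result ``follows directly from the completeness and optimality of \beauty and \beast, and from Thm.~\ref{thm:B*}''; you have simply made explicit the line-by-line case analysis (including the key observation $u(\pi_{SLB})\ge U^*$ justifying the call to Thm.~\ref{thm:conditional_optimality_u}) that the paper leaves implicit. The one degenerate corner you flag yourself---$L^*=U^*=0$ with $u(\pi_{SLB})>0$ reaching the final line and yielding $0/0$---is not addressed by the paper either, so your treatment is if anything slightly more careful.
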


The proof follows directly from the completeness and optimality of \beauty and \beast, and from Thm.~\ref{thm:B*}.

\section{Empirical Evaluation}\label{sec:empir}

\begin{figure*}[t]
		\centering
		\includegraphics{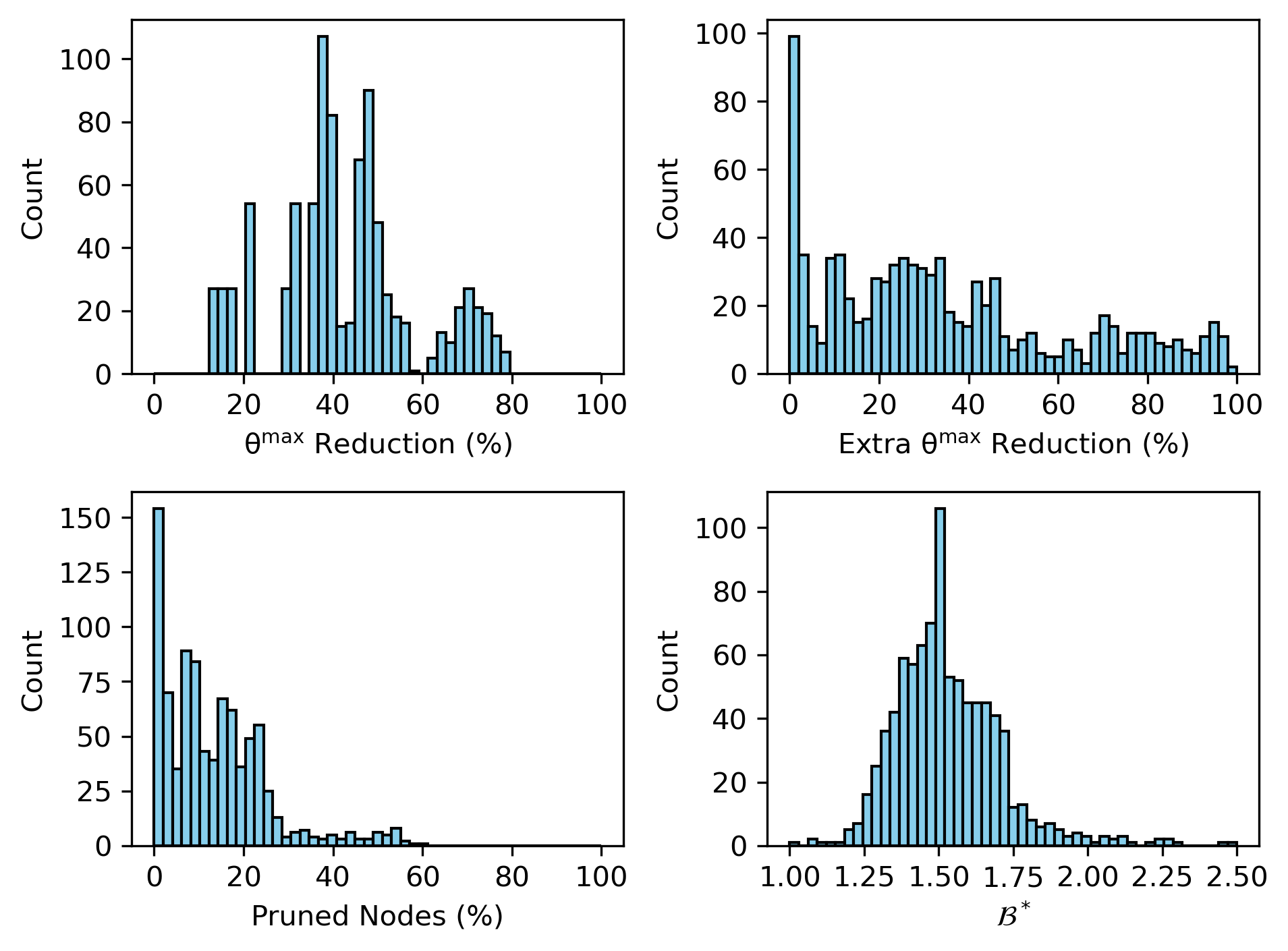}
		\caption{Histograms of $1-(\theta^{max}$(\beast$(\infty)$)$/\theta^{max}$(\eiucs)) [top left],  $1-(\theta^{max}$(\beast$(u(\pi_{SLB}))$)$/\theta^{max}$(\beast$(\infty)$)) [top right], pruned nodes percentage for \beast$(u(\pi_{SLB}))$ [bottom left], $\mathcal{B}^*$ [bottom right], based on all domains.}
		\label{plot:histograms}
\end{figure*}

The theoretical guarantees of \beast (base setting) and \bnb (enhanced setting) assure optimality and completeness, but 
do not provide information about their runtime performance. We therefore empirically evaluate 
the algorithms in diverse settings, based on AI planning benchmark problems that were modified to have multiple action-cost estimators, so that these induce TASP problems. 

The set of problems was taken from
a collection of IPC (International Planning Competition) benchmark instances\footnote{See \url{https://github.com/aibasel/downward-benchmarks}.}. 
Starting from the full collection, we selected all domains that use action costs, were part of an optimal track in IPC, and without duplication (e.g., only one version of the Elevators domain).
Then, we created additional problems by using different configurations of costs.
Lastly, for all domains and problems, we synthesized three estimators, using six parameters $f_1,...,f_6$, according to the scheme described below.

The original cost $c_{old}(e)$ (that is implied by the original domain and problem files) of each edge $e$ was mapped to a new cost $c_{new}(e)$ that satisfies 
\begin{equation}
c_{old}(e) \times f_3 \le c_{new}(e) \le c_{old}(e) \times f_4,
\end{equation}
i.e., the exact value of $c_{new}(e)$ is contained in the interval
$[c_{old}(e) \times f_3,c_{old}(e) \times f_4]$.
Lower bounds were defined by
\begin{equation}
l_e^1:=c_{old} \times f_1, l_e^2:=c_{old} \times f_2, l_e^3:=c_{old} \times f_3,
\end{equation}
with $f_3 \ge f_2 \ge f_1 \ge 1$, so that $l_e^i$ is the i\textsuperscript{th} lower bound ($l_e^1$ is the loosest and $l_e^3$ is the tightest).
Upper bounds were analogously defined by
\begin{equation}
u_e^1:=c_{old} \times f_6, u_e^2:=c_{old} \times f_5, u_e^3:=c_{old} \times f_4,
\end{equation}
with $f_6 \ge f_5 \ge f_4 \ge f_3$, so that $u_e^i$ is the i\textsuperscript{th} upper bound ($u_e^1$ is the loosest and $u_e^3$ is the tightest).

To diversify the estimator sets for different edges, the parameters for lower bounds $f_1,f_2,f_3$ were taken from the sets
\begin{equation}
\begin{split}
&f_1 \in \{1, 2, 3\}, f_2 \in \{f_1, f_1+1, f_1+2\},\\
&f_3 \in \{f_2, f_2+1, f_2+2\}.
\end{split}
\end{equation}
Similarly the parameters for upper bounds $f_4,f_5,f_6$ were taken from the sets
\begin{equation}
\begin{split}
&f_4 \in \{f_3+1, f_3+2, f_3+3\},\\ 
&f_5 \in \{f_4, f_4+1, f_4+2\},\\
&f_6 \in \{f_5, f_5+1, f_5+2\}.
\end{split}
\end{equation}
This induced a very wide range of relations between the different estimators and a wide variety of uncertainty levels (reflected by $\mathcal{B}^*$).
The choice of configuration was taken according to the result of a simple hash function, that depends on $c_{old}(e)$ and a user-input seed, described as follows: 
\begin{equation}\label{eq:hash}
\text{Hash} = (c_{old}(e) + \text{seed}) \mod 27,
\end{equation}
so that every value of Hash corresponded to one estimator configuration for the lower and upper bounds.
Each problem was run once per seed, where all seeds from the set $[0,26]$ were taken (namely, 27 seeds per problem).
The full list of domains, problems and configurations that were used in the experiments is detailed in (link redacted for anonymity). 

We note that the estimator configurations that were chosen according to the hash function of Eq.~\eqref{eq:hash} guarantee that the same ground action, in different states, will have the same cost estimates.
However, we also note that our algorithms can seamlessly handle state-dependent cost estimates.

\beast and \bnb were implemented as search algorithms in \emph{PlanDEM} (Planning with Dynamically Estimated Action Models~\cite{plandem},
an open source C++ planner that extends Fast Downward (FD)~\cite{helmert-jair2006} (v20.06).
All experiments were run on an Intel i7-1165G7 CPU (2.8GHz), with 32GB of RAM, in Linux.
We also implemented \emph{Estimation-time Indifferent \ucs} (\eiucs), a \ucs
algorithm that uses the most accurate estimate on each edge it encounters, to serve as a baseline for solving SUB.

We measured the performance of solving SUB via \eiucs, \beast (base setting) and \bnb (enhanced setting with information obtained from solving first SLB). 
We report the results (summarized in Table~\ref{table:results}, extended in Fig.~\ref{plot:histograms}) from problem instances which all algorithms solved successfully, i.e., found optimal solutions, within 5 minutes.
Overall, we report on a cumulative set of 891 problem instances, spanning 7 unique domains.

\paragraph{Base Setting vs. \eiucs} 
In its base setting \beast expands the same nodes as \eiucs, but with potentially fewer expensive estimates (replaced by less expensive ones).
Thus, we are interested in the relative savings that it achieves in practice.
We denote the number of the \emph{most expensive} estimators invoked during the search of an algorithm $ALG$ by $\theta^{max}(ALG)$.
Column 3 in Table~\ref{table:results} provides the relevant data: it shows $1-(\theta^{max}$(\beast$(\infty)$)$/\theta^{max}$(\eiucs)) in percentages, per domain (Rows 2--8), cumulatively by average$\pm$standard deviation (Row 9), and by range minimum--maximum (Row 10).

Roughly 40\% of the expensive estimators are saved on average, with large differences across domains and problems. Over all, the range is from 14\% to almost 80\%. 
The top left plot in Fig.~\ref{plot:histograms} shows the full empirical distribution (histogram), illustrating the high variability in the results. 
We suspect that different distributions of edge costs explain some of the variability, a topic left for future research.

\paragraph{Enhanced Setting vs. Base Setting}
We can test the performance boost that can be achieved in practice from using \beast in its enhanced setting with $u_{prune} = u(\pi_{SLB})$, by running \bnb.
Column 4 in Table~\ref{table:results} provides this data: the format is similar to that of Column 3, but instead it shows $1-(\theta^{max}$(\beast$(u(\pi_{SLB}))$)$/\theta^{max}$(\beast$(\infty)$)).
It can be seen that roughly 35\% of the expensive estimators are saved on average \emph{w.r.t. to the base setting of \beast}, again with very large differences across domains and problems, where over all problem instances the range is from 0\% to around 98\%.
The top right plot in Fig.~\ref{plot:histograms} shows the full empirical distribution, revealing that in the most common case the savings are rather low, but on the other hand in quite a few cases the savings achieved are very substantial. 

Column 5 in Table~\ref{table:results} presents pruned nodes (out of generated nodes) for \beast$(u(\pi_{SLB}))$, and it shows that roughly 15\% of the generated nodes are pruned on average.  Column 6 shows $\mathcal{B}^*$ (which is a property of the problem instance), where we can see that approximately it was on average 1.5 (the values in this column are not in percentages). 

For both of these measures,
we see high variability in the results:  The range of pruned nodes (Column 5) is from 0\% to 60\%.  The range of $\mathcal{B}^*$ over all problems was from 1 to 2.5. However, examining the distributions for both columns (Fig.~\ref{plot:histograms}; bottom left plot for pruned nodes, bottom right for   $\mathcal{B}^*$), we see that the distributions are qualitatively very different. We will investigate this in future research.

We conclude that \beast seems to offer significant empirical gains over \eiucs, and that using the information obtained from solving SLB can, though not always, provide an additional significant performance boost.

\section{Conclusions}\label{sec:conc}
This paper introduces---and offers a solution to---the \textit{tightest admissible shortest path} (TASP) problem. It determines how close can one get to cost-optimality, given bounded edge-weight uncertainty.
The formalization relies on a recently suggested generalized framework for \textit{estimated weighted directed graphs}, where the cost of each edge can be estimated by multiple estimators, where every estimator has its own run-time, and returns lower and upper bounds on the edge weight.
We show how to generally solve TASP problems by reducing it to the solution of two more basic problems---SLB and SUB. We then present a complete algorithm that obtains optimal solutions for TASP problems, which uses a coupling between SLB and SUB to reduce overall run-time compared to separately solving SLB and SUB. Experiments support the efficacy of the approach.

There are many directions for future research.
Algorithmic extensions to informed search seem highly relevant, as well as exploring additional trade-offs between search and estimation time to reduce overall run-time.
Extending the framework to utilize priors on estimation times to choose estimators across edges also appear promising.


\bibliography{short,weiss_references,graph_references}
\end{document}